\newcites{xxx}{Supplemental References}
\newcommand{\beq}{\begin{equation}}
\newcommand{\eeq}{\end{equation}}
\newcommand{\bqa}{\begin{eqnarray}}
\newcommand{\eqa}{\end{eqnarray}}
\newcommand{\bra}[1]{ \langle{#1} |}
\newcommand{\ket}[1]{ |{#1} \rangle}
\newcommand{\mbb}[1]{\mathbb{#1}}
\newcommand{\II}{\mbb{I}}
\newcommand{\CC}{\mbb{C}}
\newcommand{\Tr}{\operatorname{Tr}}
\newcommand{\ketbra}[2]{\left| #1 \right\rangle \left\langle #2 \right|}
\newcommand{\braket}[2]{\left\langle #1 \vert #2 \right\rangle}
\newcommand{\abs}[1]{\left| #1\right|}
\newcommand{\norm}[1]{\left\| #1 \right\|}
\newtheorem{lemma}{Lemma}
\newtheorem*{lemma*}{Lemma}
\newtheorem*{corollary*}{Corollary}
\theoremstyle{remark}
\newtheorem{remark}{Remark}
\definecolor{nblue}{rgb}{0.3,0.32,0.85}
\definecolor{nblue}{rgb}{0.0,0.0,0.0}
\newcommand{\blu}{\color{nblue}}
\definecolor{maroon}{rgb}{0.7,0,0}
\definecolor{ngreen}{rgb}{0.3,0.7,0.3}
\definecolor{golden}{rgb}{0.8,0.6,0.1}
\begin{document}

\title{Conclusive Experimental Demonstration of One-Way Einstein-Podolsky-Rosen Steering}

\author{Nora Tischler}
\affiliation{Centre for Quantum Computation and Communication Technology (Australian Research Council), Centre for Quantum Dynamics, Griffith University, Brisbane, QLD 4111, Australia.}

\author{Farzad Ghafari}
\affiliation{Centre for Quantum Computation and Communication Technology (Australian Research Council), Centre for Quantum Dynamics, Griffith University, Brisbane, QLD 4111, Australia.}

\author{Travis J. Baker}
\affiliation{Centre for Quantum Computation and Communication Technology (Australian Research Council), Centre for Quantum Dynamics, Griffith University, Brisbane, QLD 4111, Australia.}

\author{Sergei Slussarenko}
\affiliation{Centre for Quantum Computation and Communication Technology (Australian Research Council), Centre for Quantum Dynamics, Griffith University, Brisbane, QLD 4111, Australia.}

\author{Raj B. Patel}
\affiliation{Centre for Quantum Computation and Communication Technology (Australian Research Council), Centre for Quantum Dynamics, Griffith University, Brisbane, QLD 4111, Australia.}
\affiliation{Centre for Quantum Computation and Communication Technology  (Australian Research Council), Quantum Photonics Laboratory, School of Engineering, RMIT University, Melbourne, Victoria 3000, Australia.}

\author{Morgan M. Weston}
\affiliation{Centre for Quantum Computation and Communication Technology (Australian Research Council), Centre for Quantum Dynamics, Griffith University, Brisbane, QLD 4111, Australia.}

\author{Sabine  Wollmann}
\affiliation{Centre for Quantum Computation and Communication Technology (Australian Research Council), Centre for Quantum Dynamics, Griffith University, Brisbane, QLD 4111, Australia.}
\affiliation{Quantum Engineering Technology Labs, H. H. Wills Physics Laboratory and Department of Electrical \& Electronic Engineering, University of Bristol, Bristol BS8 1FD, UK.
}

\author{Lynden K. Shalm}
\affiliation{National Institute of Standards and Technology, 325 Broadway, Boulder, Colorado 80305, USA.}

\author{Varun B. Verma}
\affiliation{National Institute of Standards and Technology, 325 Broadway, Boulder, Colorado 80305, USA.}

\author{Sae Woo Nam}
\affiliation{National Institute of Standards and Technology, 325 Broadway, Boulder, Colorado 80305, USA.}

\author{{\blu H. Chau Nguyen}}
\affiliation{{\blu Naturwissenschaftlich-Technische Fakult\"{a}t, Universit\"{a}t Siegen, Walter-Flex-Straße 3, D-57068 Siegen, Germany.}}

\author{Howard M. Wiseman}
\affiliation{Centre for Quantum Computation and Communication Technology (Australian Research Council), Centre for Quantum Dynamics, Griffith University, Brisbane, QLD 4111, Australia.}

\author{Geoff J. Pryde}
\email{g.pryde@griffith.edu.au}
\affiliation{Centre for Quantum Computation and Communication Technology (Australian Research Council), Centre for Quantum Dynamics, Griffith University, Brisbane, QLD 4111, Australia.}

\begin{abstract}
Einstein-Podolsky-Rosen steering is a quantum phenomenon wherein one party influences, or steers, the state of a distant party's particle beyond what could be achieved with a separable state, by making measurements on one half of an entangled state. This type of quantum nonlocality stands out through its asymmetric setting, and even allows for cases where one party can steer the other, but where the reverse is not true. A series of experiments have demonstrated one-way steering in the past, but all were based on significant limiting assumptions. These consisted either of restrictions on the type of allowed measurements, or of assumptions about the quantum state at hand, by mapping to a specific family of states and analyzing the ideal target state rather than the real experimental state. Here, we present the first experimental demonstration of one-way steering free of such assumptions. {\blu We achieve this using a new sufficient condition for nonsteerability, and,} although not required by our analysis, using a novel source of extremely high-quality photonic Werner states.

\end{abstract}

\maketitle

\paragraph{Introduction.---} 

One of the most noteworthy and fundamental features of quantum mechanics is the fact that it admits stronger correlations between distant objects than what would be possible in a classical world. Quantum correlations can be categorized into the following classes, which form a strict hierarchy \cite{Wiseman2007,Quintino2015, Saunders2010}: entanglement is a superset of Einstein-Podolsky-Rosen (EPR) steerability, which in turn is a superset of Bell nonlocality. Out of these, steering is special in that it allows for, and in fact intrinsically contains, asymmetry. Steering is operationally defined as a quantum information task, where one untrusted party (for instance called Alice) tries to convince another distant, trusted party (Bob) that they share entanglement. Bob asks Alice to make certain measurements on her quantum system (e.g., particle) and to announce the measurement outcomes, but is not sure whether Alice answers honestly, or indeed, even has a particle. He also makes corresponding measurements on his particle and checks whether the correlations of their measurement outcomes rule out a so-called local hidden state model for his particle, thereby proving shared entanglement \cite{Wiseman2007}. 

Interestingly, the steering task allows for the case of one-way steerable states, for which steering is possible in one direction but impossible in the reverse direction \cite{Bowles2014}. One-way steering is of foundational interest, since it is a striking manifestation of asymmetry that does not exist for entanglement and Bell-nonlocality. It also has applications in device-independent quantum key distribution \cite{Branciard2012}. To observe one-way steering, one needs to demonstrate steering in one direction, by violating a steering inequality. In addition, one must establish that it would be impossible to achieve steering in the opposite direction. Our scheme, which allows for arbitrary measurements and rigorously takes into account losses and the real experimental quantum state, is illustrated in Fig.\ \ref{fig1}.

\begin{figure}[b]
	\centering
	\includegraphics[width=8.2cm]{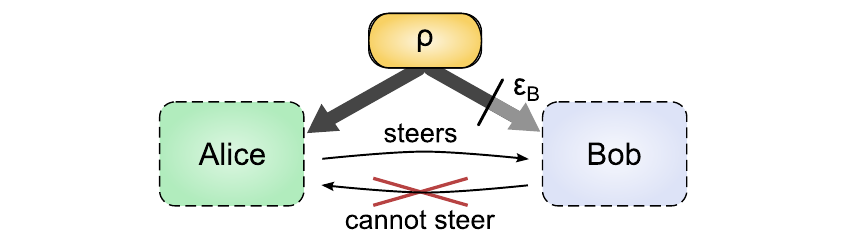}

	\caption{Scheme for demonstrating one-way steering. A two-qubit quantum state is distributed to Alice and Bob, with a lossy channel on the way to Bob, such that his probability of obtaining his qubit is $\varepsilon_B$. A detection-loophole-free steering test demonstrates that Alice can steer Bob's state. At the same time, it is established that Bob cannot steer Alice's state for any choice of measurements, based directly on the reconstructed experimental quantum state $\rho$ and the measured efficiency $\varepsilon_B$.}
	\label{fig1}
\end{figure}

Since the question whether one-way steering is possible was first raised in the seminal paper of Ref.\ \cite{Wiseman2007}, considerable progress has been made on the topic \cite{Midgley2010,Handchen2012,Bowles2014,Skrzypczyk2014,Evans2014,Quintino2015,Wollmann2016,Bowles2016,Sun2016,Rao2016,Xiao2017,Olsen2017,Wang2017,Baker2018}. First, the original question was answered in the affirmative, and this gave rise to the quest to fully understand and demonstrate the phenomenon. An overarching effort of these works has been the elimination of assumptions. 

%\onecolumngrid
\begin{figure*}[t]
	\centering
	\includegraphics[width=17.2cm]{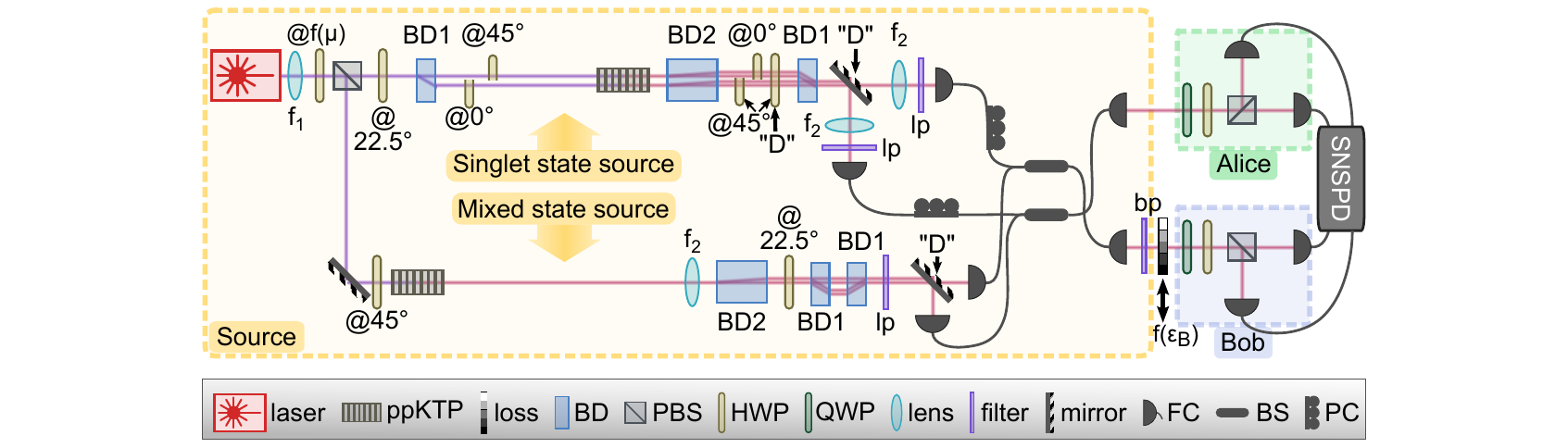}

	\caption{Experimental setup. As detailed in the main text, the tunable source of telecom-wavelength two-qubit Werner states is constructed as an incoherent superposition of the outputs from a singlet state source and a source of the maximally mixed two-qubit state. After a variable loss in one arm, polarization measurements are carried out in Alice and Bob's stations, enabling quantum state reconstruction and steering tests. Abbreviations: ppKTP, periodically poled potassium titanyl phosphate; BD, (polarizing) beam displacer; PBS, polarizing beam splitter; HWP, half-wave plate; QWP, quarter-wave plate; FC, {\blu (single-mode)} fiber coupler;  BS, (fiber) beam splitter; PC, (fiber) polarization controller; SNSPD, superconducting nanowire single-photon detectors; lp, longpass; bp, bandpass; "D", D-shaped element with a horizontal cut that is not apparent from the top view. BD2 implements a vertical beam displacement (see main text), which is illustrated in the diagram by the slightly separated pairs of beams. For further details about the experimental elements, see the {\blu SM} \cite[Sec.~III]{SI}.}
	\label{fig2}
\end{figure*}
%\twocolumngrid

On the theory side, the ultimate, so far unattained, goal would be to establish practical necessary and sufficient conditions for the steerability of arbitrary quantum states using arbitrary measurements, which are described by positive operator-valued measures (POVMs). Examples of one-way steerable states have been identified assuming projective measurements \cite{Bowles2014,Evans2014,Bowles2016} and {\blu for POVMs \cite{Skrzypczyk2014,Quintino2015,Wollmann2016}}. While specific example states provide conclusive proof that one-way steering is possible in principle, they are challenging to work with in real experiments. Real states in the laboratory generally deviate from the ideal target states, so the ability to account for these deviations is crucial. A practical, sufficient condition for the nonsteerability of arbitrary two-qubit states under the assumption of projective measurements is known \cite{Bowles2016}. Recently, a practical, sufficient condition for the nonsteerability of generic two-qubit states with loss was also established for {\blu restricted projective measurements} \cite{Baker2018} {\blu(see explanation in the Supplemental Material (SM) \cite[Sec.~I]{SI})}. \nocite{Hirsch2013}
\nocite{Barrett2002}

The elimination of assumptions has also been a key development on the experimental side. Several experiments relied on assumptions about the measurements. The first demonstration of one-way steering was restricted to the case of Gaussian measurements \cite{Handchen2012}. This was followed by a demonstration that was restricted to two-setting projective measurements \cite{Sun2016}, and another one assuming multi-setting projective measurements \cite{Xiao2017}. In contrast to these post-selection-based experiments, an experiment by some of us and co-workers had no detection loophole and therefore, its analysis could take into account vacuum state contributions to the quantum state \cite{Wollmann2016}. Also, unlike the previous experiments, it made no assumptions about the measurement. It did, however, make an assumption about the type of quantum state; an analysis for Werner states was applied to the experimentally achieved state, which exhibited a high fidelity with a Werner state. However, drawing conclusions based on high fidelities can be problematic in general \cite{Peters2004}, and caution is also warranted for the case at hand \cite{Baker2018}\cite[Sec.~V]{SI}.

\begin{table*}[!t]
%  \begin{center}
    \label{tab:table1}
    \begin{tabular}{c|c|c|c|c|c|c} 
      $\mu$ & 0.9978$\pm$0.0003  & 0.797$\pm$0.001 &  0.603$\pm$0.001& 0.398$\pm$0.002 & 0.198$\pm$0.002 & 0.007$\pm$0.002  \\
\hline
      state fidelity& 0.9981$\pm$0.0002  & 0.9964$\pm$0.0004 &  0.9983$\pm$0.0002& 0.9985$\pm$0.0001 & 0.9986$\pm$0.0001 & 0.9983$\pm$0.0001
    \end{tabular}
 \caption{Tunability and quality of the experimental quantum state. For six different pump conditions, we determine the Werner state $\rho_{\mathrm{w}}$ with which the experimental state $\rho$ has the highest fidelity. Listed are the parameter $\mu$ of the closest Werner state, and the corresponding fidelity, defined as $\left[\mathrm{Tr}\left(\sqrt{\sqrt{\rho}\rho_{\mathrm{w}}\sqrt{\rho}} \right)\right]^2$. Uncertainties are obtained from Monte Carlo simulations based on Poisson distributed counts to generate two hundred variations on each of the experimental tomography measurement results. The reconstructed density matrices are shown in the {\blu SM} \cite[Sec.~II]{SI}.}
 % \end{center}
\end{table*}

Here, we present the first fully rigorous experimental demonstration of two-qubit one-way steering. In one direction, we demonstrate the violation of a steering inequality with the detection loophole closed. Using the recent theory result of Ref.\ \cite{Baker2018} {\blu and new theory developed in the SM \cite[Sec.~I]{SI}, we provide} a sufficient condition for nonsteerability, valid for general POVMs {\blu performed on} arbitrary two-qubit states with loss, {\blu and} conclusively show that our state is not steerable in the opposite direction. We further demonstrate the impact of different experimental parameters, which highlights the delicate nature of experimental one-way steering. Although the formalism does not assume it, our experimental states are very close to two-qubit Werner states. Two-qubit Werner states comprise a one-parameter family of states written as $\rho_{\mathrm{w}}=\mu |\Psi^-\rangle\langle\Psi^-|+\left(1-\mu \right)/4 \mathbf{I}_4$, where $|\Psi^- \rangle=\left(|01\rangle - |10\rangle \right)/\sqrt{2}$ is the singlet state and $\mathbf{I}_4$ is the $4\times 4$ identity matrix. These states represent a well-known example of mixed states \cite{Werner1989}, {\blu with their} purity determined by the Werner state parameter $\mu\in \left[ 0,1\right]$.

A number of sources of photonic two-qubit Werner states have been reported in the past \cite{Zhang2002,Barbieri2004,Liu2017,Sun2018,Li2018}. Here, we use a new type of photon source, producing high quality states that have unprecedented fidelities with Werner states. 

\paragraph{Werner state source.---}

Our photonic source of Werner states is based on spontaneous parametric down-conversion (SPDC) with a picosecond pulsed pump laser, producing photon pairs at telecom wavelength, with the quantum state encoded in the polarization degree of freedom ($|H\rangle\equiv|0\rangle$, |$V\rangle\equiv|1\rangle$). It is constructed as an incoherent superposition of a singlet state source and a source of maximally mixed photon pairs. Our design provides high heralding efficiencies and full control of the Werner state parameter $\mu$.

The detailed setup is illustrated in Fig.\ \ref{fig2}. A 775 nm pulsed laser with variable power and a pulse length of 1 ps acts as the pump for the two individual sources comprising the overall source. After passing through a focusing lens, the pump beam is divided between the two sources with a controllable splitting ratio by using a half-wave plate (HWP) and polarizing beam splitter (PBS). 

The singlet state source is based on the design of Ref.\ \cite{Shalm2015} and essentially implements a superposition of two SPDC events within a beam displacer interferometer. The pump passes through a HWP that sets its polarization to an equal superposition of horizontal (H) and vertical (V) components, which are then horizontally split into two beams by the first beam displacer (BD). The next two HWPs act to make the polarizations of both beams H, appropriate for the subsequent down-conversion process, while matching the path lengths of the two beams. The beams then pump the 15 mm long periodically poled potassium titanyl phosphate (ppKTP) crystal in two places, enabling degenerate type-II SPDC. The second BD separates signal and idler photons vertically, resulting in a total of four down-converted photon beams for the one photon pair. The next three HWPs modify the polarizations of the beams such that the left two beams are H polarized, while the right two beams are V polarized. This allows overlapping the signal photon from the two different down-conversion beams with the third BD, and likewise for the idler photon. A D-shaped mirror separates the propagation directions of the signal and idler photon beams, each of which are collimated, have the pump light filtered out with a longpass filter, and are {\blu coupled into single-mode fiber. To transform the maximally entangled state of $|HH\rangle$ and $|VV\rangle$ to one of $|HV\rangle$ and $|VH\rangle$, a} $90^{\circ}$ polarization rotation for one of the two photons is implemented with in-fibre polarization controllers, and the phase $\phi$ of the target state $\left(|HV\rangle -e^{i\phi}|VH\rangle\right)/\sqrt{2}$ can be controlled through slight tilting of the first BD, or by adjusting the crystal temperature.

The design of the mixed state source is such that a separable photon pair is created, and then each photon is fully depolarized, yielding the target state $ \mathbf{I}_4/4$. The pump beam passes through a ppKTP crystal identical to the one in the entangled state source, creating one H and one V polarized photon, which are collimated with a lens. The two photons are vertically separated into two beams with a BD, and subsequently their polarization is rotated by $45^{\circ}$ with a HWP. An imbalanced BD interferometer, in which one polarization component passes straight through and the other component undergoes spatial walk-off twice (in opposite directions), decoheres the polarization of each of the signal and idler photon completely. A longpass filter discards the pump light, before the propagation directions of the signal and idler beams are separated with a D-shaped mirror and they are fiber coupled.

The two individual sources are mixed using 50:50 fiber beam splitters, which combine the signal photon contributions coming from the two sources, and likewise for the idler photon. This mixing is incoherent, since the path lengths through the two sources are sufficiently different. Finally, a bandpass filter in Bob's arm narrows the biphoton spectrum \cite[Sec.~III]{SI}, which enhances the polarization state quality for the singlet source. By tuning the relative power of the pump in the two individual sources, the parameter $\mu$ can be controlled. For a range of relative power values, we perform quantum state tomography of the photon pairs using a combined pump power setting of $\sim 75$ mW, and determine the fidelities with the closest Werner states, as detailed in Table I. These fidelities are the highest reported values to date. 

A further noteworthy feature of our source is its high heralding efficiency. Despite a $50\%$ loss due to the mixing of the two individual sources via 50:50 beam splitters and the additional components in the measurement apparatus, we still obtain typical heralding efficiencies (defined as detected coincidences divided by the detected singles of the opposite arm, {\blu also called Klyshko efficiency \cite{Klyshko1980}}) of $0.3100\pm 0.0003$ and $0.2345\pm0.0002$, for the arm without and with the bandpass filter, respectively. The high heralding efficiencies are made possible by the choices of the pump beam waist, the detection beam waist, and high-efficiency superconducting nanowire single-photon detectors \cite{Marsili2013}\cite[Sec.~III]{SI}. 

\paragraph{One-way steering.---}

To demonstrate one-way steering, we use the same setup as before, with some minor modifications. To add controllable loss, we insert a multi-setting neutral density filter before the detection apparatus in Bob's arm, which lowers his overall heralding efficiency to $\varepsilon_B$. We also increase the total pump power to $\sim 300$ mW in order to maintain a sufficiently high signal-to-noise ratio with the attenuated beam against the detector dark counts, which are $\sim100$ per second \cite[Sec.~III]{SI}. 
As shown in Fig.\ \ref{fig3} and explained below, the output of our Werner state source together with the added loss creates one-way steerable states, provided that the values of $\mu$ and $\varepsilon_B$ are suitably chosen. Note that in the steering experiment, some of the fidelities with the closest Werner states are lower than the results shown in Table I, but our subsequent analysis is robust as it makes no assumption of the experimental states being Werner states.

To demonstrate one-way steering, we perform two sets of measurements. The purpose of the first set is to show steering from Alice to Bob. This is done via a steering test with $n=6$ measurement settings, using a platonic-solid measurement scheme \cite{Bennet2012}. Detection-loophole-free steering is demonstrated if the correlations of the measurement outcomes are sufficiently large, resulting in a steering parameter that exceeds the $n=6$ steering bound (the definition of the steering parameter is provided in the {\blu SM} \cite[Sec.~IV]{SI}). The bound is a function of Alice's heralding efficiency, $\varepsilon_A$, because in this task, she is the person who is attempting to steer her opponent's state. {\blu Our experiment thus necessarily closes the detection efficiency loophole, 
though we make no claim to close the space-like-separation loophole.}

The purpose of the second set of measurements is to establish nonsteerability from Bob to Alice, for general POVMs. This is achieved via a quantum state tomography, through which our experimental density matrix is reconstructed. Based on the density matrix and Bob's experimentally-measured heralding efficiency, we test the criterion for nonsteerability {\blu derived in the SM \cite[Sec.~I]{SI}}, 
\begin{equation} 
N_{\rm POVM}\le 1.
\label{Traveqn}
\end{equation}
Here $N_{\rm POVM}$ is defined as
\begin{equation} 
N_{\rm POVM}=\max_{\hat{\bf{x}}\in \widehat{\mathbb{R}^3}} [(1-{\blu 3} \varepsilon_B) |{\bf{b}}\cdot\hat{{\bf{x}} }|+ \frac{{\blu 3} \varepsilon_B}{{\blu 2}}  (1+({\bf{b}}\cdot \hat{{\bf{x}}})^2)+||T\hat{{\bf{x}}}||],
\label{Traveqn2}
\end{equation}
where $\bf{b}$ is Bob's local Bloch vector, $T$ is the correlation matrix of the quantum state in its canonical form, and $||...||$ denotes the 2-norm. The maximization is carried out over all unit vectors $\hat{{\bf{x}}}$ in three dimensions.
%Eq.\ (\ref{Traveqn}) is a sufficient condition for nonsteerability from Bob to Alice.
{\blu This criterion is stronger than that in Ref.\ \cite{Baker2018}, and its derivation (see the SM \cite[Sec.~I]{SI}) is more rigorous: It ensures nonsteerability from Bob to Alice without restricting Bob's measurements to POVMs on the photonic qubit subspace.}

\begin{figure}[b!]
%	\centering
	\includegraphics[width=0.5\textwidth]{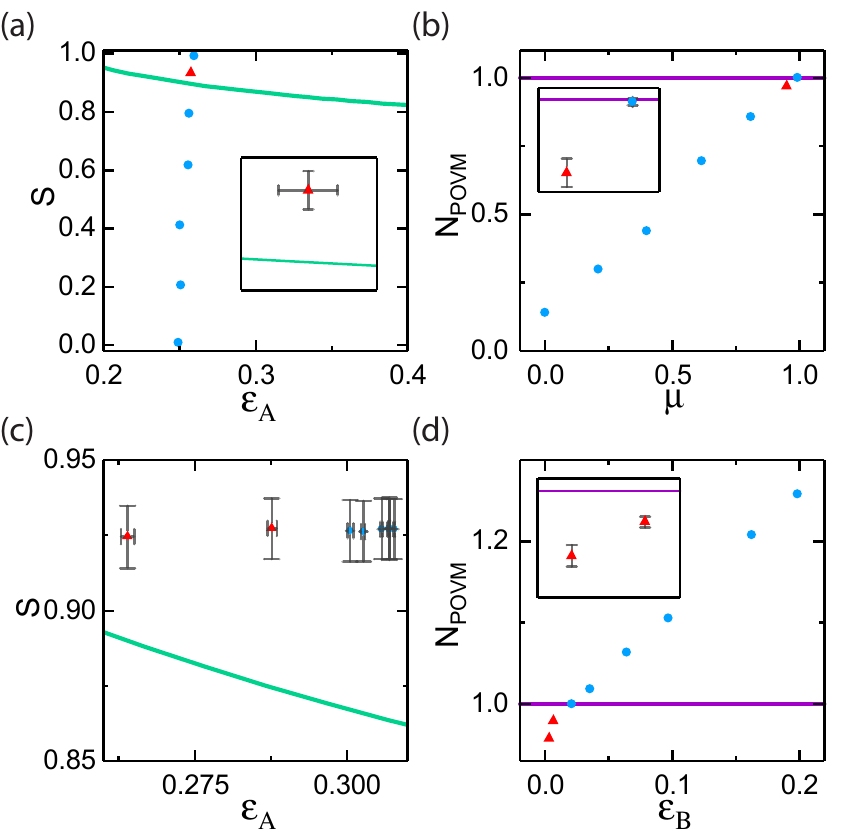}

	\caption{Experimental demonstration of one-way steering. The upper (lower) panels contain results for a set of states where the effective $\mu$ parameter (Bob's heralding efficiency $\varepsilon_B$) is varied. The left panels contain results for a detection-loophole-free steering test for Alice to steer Bob, where the steering parameter $S$ for $n=6$ measurement settings is plotted against Alice's heralding efficiency $\varepsilon_A$. For any data point above the bound given by the green line, steering from Alice to Bob is demonstrated. The right panels depict results for {\blu our} sufficient condition for nonsteerability with arbitrary POVMs, and data points below the purple line are conclusively nonsteerable from Bob to Alice. Each data point from one of the left panels corresponds to a point from the right panel, with a pair representing a specific quantum state. Data points in order of increasing $S$ from (a) correspond to the points with increasing $\mu$ in (b). Similarly, data points in order of increasing $\varepsilon_A$ in (c) correspond to those with increasing $\varepsilon_B$ in (d). Uncertainties for the steering parameter are calculated as $\Delta S=\sqrt{\Delta S(\mathrm{systematic})^2+\Delta S(\mathrm{statistical})^2}$ \cite{Bennet2012}. The other uncertainties are based on Monte Carlo simulations of the measurement outcomes, using two hundred samples of Poisson distributed counts. Where uncertainties are small and error bars would reduce the clarity of the plots, the error bars are not shown. However, the relevant uncertainties are provided in the insets, which enlarge areas of interest. Conclusively one-way steerable states are marked by the red triangles.}
	\label{fig3}
\end{figure}

Obtaining a steering parameter in one direction above the steering bound and showing, based on the density matrix and heralding efficiency, that the corresponding quantum state is unsteerable in the opposite direction, successfully demonstrates one-way steering.
We perform the measurements for two sets of quantum states. In the first set, we keep the loss added by the neutral density filter fixed {\blu such that Bob's heralding efficiency is $\varepsilon_B=(2.52\pm0.03)\times 10^{-3}$}, while varying $\mu$. The results from the steering test are shown in Fig.\ \ref{fig3}(a), and the results from the corresponding test of the sufficient condition for nonsteerability in the opposite direction are depicted in Fig.\ \ref{fig3}(b). Of all the $\mu$ values shown, only one, marked by the red triangle, is conclusively one-way steerable (steering bound violation from Alice to Bob by 3.8 standard deviations (s.d.), and fulfillment of the sufficient condition for nonsteerability from Bob to Alice with a margin of {\blu 5.3} s.d.). 

For the second set of quantum states, we keep $\mu$ fixed {\blu at $0.951\pm0.004$}, while varying the loss added by the neutral density filter. The results of the steering test are given in Fig.\ \ref{fig3}(c), and the results from the corresponding test of the sufficient condition for nonsteerability in the opposite direction are shown in Fig.\ \ref{fig3}(d). Here, the two {\blu states} corresponding to the lowest $\varepsilon_B$ values are conclusively one-way steerable (steering bound violation by 3.3 and 5.2 s.d., and nonsteerability with margins of {\blu 6.0 and 5.6} s.d., respectively). The {\blu states with} higher $\varepsilon_B$ are no longer conclusively nonsteerable from Bob to Alice. The ability to further reduce $\varepsilon_B$ is limited for technical reasons only, namely the decreasing signal-to-noise ratio due to dark counts, which reduces the ideally constant measured heralding efficiency $\varepsilon_A$ when the attenuation is very high.

The results highlight that in practice, demonstrating one-way steering based on two-qubit states with loss requires a balance between (i) having sufficient correlations to observe steering in one direction, while (ii) keeping the loss needed to conclude nonsteerability in the opposite direction at a technically feasible level.

\paragraph{Discussion.---}

Our experiment is based on a two-qubit state with loss. Implementing loss in a quantum information protocol is relatively straightforward. In fact, some amount of loss is generally unavoidable in practice, so even if the loss was not actively leveraged, an experimental analysis would need to account for it in any case. Therefore, a two-qubit state with loss  is well-motivated from a practical perspective.  

It is worth emphasizing that we establish nonsteerability by checking against {\blu our \emph{sufficient}} condition for nonsteerability {\blu (Eq.\ (\ref{Traveqn}))}. This condition offers the best currently available method for demonstrating the nonsteerability of general two-qubit states with loss, allowing for general POVMs. However, the condition is not proven to be tight, so it is possible that tighter conditions will be found in the future. For example, it might be possible to show that the necessary and sufficient conditions for steerability coincide for projective measurements and POVMs, which would then make it easier to demonstrate one-way steering.  However, the fact that we work with a sufficient condition for nonsteerability means that our results are conclusive now, and will remain so, even in the event that tighter conditions are found in the future.

\paragraph{Conclusion.---}

In this work, we present a new, high-heralding-efficiency photon-pair source that produces quantum states with very large fidelities with two-qubit Werner states, and provides full control of the Werner state parameter. {\blu We use the source and a new sufficient condition for nonsteerability to achieve} a rigorous demonstration of two-qubit one-way steering free of previous limiting assumptions about the experimental quantum state or measurement.

%\nocite{Hirsch2013}
%\nocite{Barrett2002}

\begin{acknowledgements}
This work was supported by the ARC Centre of Excellence CE110001027. F.G., T.J.B., and S.W. acknowledge financial support through Australian Government Research Training Program Scholarships. {\blu C.N. acknowledges support by the DFG and the ERC (Consolidator Grant 683107/TempoQ).}
\end{acknowledgements}

%\twocolumngrid

\bibliography{biblio_OWS2}

%~~

\onecolumngrid

\newpage 

%%%%%%%%%%%%%%%%%%%%%%%%%%%%%%%%%%%%%%%%%%%%% SM starts here %%%%%%

\setcounter{equation}{0}
\renewcommand{\theequation}{S.\arabic{equation}}

\setcounter{figure}{0}
\renewcommand{\thefigure}{S\arabic{figure}}

%\newpage

\renewcommand{\thepage}{Supplemental Material -- \arabic{page}/5}
\setcounter{page}{1}

\begin{center}
  \large {\bf Supplemental Material: Conclusive Experimental Demonstration of One-Way Einstein-Podolsky-Rosen Steering}
\end{center}
\vskip 1em
\begin{center}
   \lineskip .75em%
  \begin{tabular}[t]{c}
    Nora Tischler, Farzad Ghafari, Travis J. Baker, Sergei Slussarenko, Raj B. Patel\\
    Morgan M. Weston, Sabine  Wollmann, Lynden K. Shalm, Varun B. Verma\\
    Sae Woo Nam, H. Chau Nguyen, Howard M. Wiseman, Geoff J. Pryde
  \end{tabular}%\par
\end{center}

{\blu
\section{I. The sufficient condition for nonsteerability}

In this section, we provide a proof for the nonsteerability criterion {\blu of} Eqs.\ (1) and (2) in the main text. For theoretical convenience, we consider quantum steering from Alice to Bob as in Ref.~\cite{Baker2018}; a simple permutation {\blu at} the end allows us to obtain the {\blu criterion} for {\blu nonsteerability} from Bob to Alice as described by {\blu Eqs.} (1) and (2) in the main text. In the following, we use PVMs to denote \emph{projective measurements}, and $n$-POVMs to denote \emph{positive operator-valued measures of $n$ outcomes}. In fact, we work very often with the notion of $2$-POVMs with a rank-$1$ projection component. Those $2$-POVMs are of the form $(Q,\II-Q)$ with $\II$ being the identity operator and $Q$ being a rank-$1$ projection. 

We start with repeating the proof of Lemma 1 of Quintino et al.\ in Ref.~\cite{Quintino2015} in its more general form (see Ref.~\citexxx{Hirsch2013}).

%%%
\begin{lemma}[Quintino et al.]
If a state $\rho$ of $\CC^d\times \CC^{d'}$ is nonsteerable for $2$-POVMs with a rank-$1$ projection component, then the state
\begin{equation}
\tilde{\rho}= \frac{1}{d} \rho + \frac{d-1}{d} \sigma_A \otimes \rho_B
\end{equation}
with arbitrary state $\sigma_A$ and $\rho_B= \Tr_A[\rho]$ is nonsteerable with arbitrary POVMs.
\label{lem:noisy_POVM}
\end{lemma}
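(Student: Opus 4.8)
The plan is to prove nonsteerability of $\tilde\rho$ directly by \emph{constructing} a local hidden state (LHS) model for arbitrary POVMs out of the model that the hypothesis already supplies for $\rho$. Write the given ensemble as $\{\pi(\lambda),\rho_\lambda\}$, so that for every rank-$1$ projector $\ketbra{\phi}{\phi}$ the $2$-POVM $(\ketbra{\phi}{\phi},\II-\ketbra{\phi}{\phi})$ on $\rho$ is reproduced by some response $p(\phi|\lambda)\in[0,1]$, with $\Tr_A[(\ketbra{\phi}{\phi}\otimes\II)\rho]=\int\pi(\lambda)\,p(\phi|\lambda)\,\rho_\lambda\,d\lambda$ and $\int\pi(\lambda)\rho_\lambda\,d\lambda=\rho_B$. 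I would keep this \emph{same} ensemble for $\tilde\rho$ (note $\Tr_A[\tilde\rho]=\rho_B$ too), so that only a new, POVM-dependent response function needs to be engineered; the ensemble stays fixed across all measurements, as the definition of nonsteerability requires.

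Next, fix an arbitrary POVM $\{M_a\}$ on Alice and take its spectral decomposition $M_a=\sum_i \lambda_{a,i}\ketbra{\phi_{a,i}}{\phi_{a,i}}$ with $\lambda_{a,i}\in[0,1]$. The target assemblage splits as $\Tr_A[(M_a\otimes\II)\tilde\rho]=\tfrac1d\,\Tr_A[(M_a\otimes\II)\rho]+\tfrac{d-1}{d}\Tr[M_a\sigma_A]\,\rho_B$. Linearity plus the LHS model for each eigenprojector gives $\tfrac1d\Tr_A[(M_a\otimes\II)\rho]=\int\pi(\lambda)\,r(a|\lambda)\,\rho_\lambda\,d\lambda$ with $r(a|\lambda):=\tfrac1d\sum_i\lambda_{a,i}\,p(\phi_{a,i}|\lambda)\ge0$. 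Since $\sum_{a,i}\lambda_{a,i}=\Tr[\sum_a M_a]=d$ and $p\le1$, these responses are subnormalized, $\sum_a r(a|\lambda)\le1$; I call the deficit $\delta(\lambda):=1-\sum_a r(a|\lambda)\ge0$. I would then define the full response $P(a|\lambda):=r(a|\lambda)+\delta(\lambda)\,\Tr[M_a\sigma_A]$, which is manifestly nonnegative and, using $\sum_a\Tr[M_a\sigma_A]=\Tr[\sigma_A]=1$, satisfies $\sum_a P(a|\lambda)=1$, so it is a legitimate response distribution.

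What remains is to verify that this response, fed into the fixed ensemble, reproduces $\tilde\rho$'s assemblage exactly, and this is the one genuinely non-obvious step. The $r$-term already returns $\tfrac1d\Tr_A[(M_a\otimes\II)\rho]$, so I must show the deficit term returns precisely $\tfrac{d-1}{d}\Tr[M_a\sigma_A]\rho_B$, i.e. that $\int\pi(\lambda)\delta(\lambda)\rho_\lambda\,d\lambda=\tfrac{d-1}{d}\rho_B$ \emph{as an operator}, not merely in trace. This follows from the consistency of the ensemble together with $\sum_a M_a=\II$: $\int\pi\delta\rho_\lambda\,d\lambda=\int\pi\rho_\lambda\,d\lambda-\sum_a\int\pi\, r(a|\lambda)\rho_\lambda\,d\lambda=\rho_B-\tfrac1d\sum_a\Tr_A[(M_a\otimes\II)\rho]=\rho_B-\tfrac1d\rho_B=\tfrac{d-1}{d}\rho_B$. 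With this identity, $\Tr_A[(M_a\otimes\II)\tilde\rho]=\int\pi(\lambda)P(a|\lambda)\rho_\lambda\,d\lambda$ for the fixed ensemble and every POVM, which is exactly nonsteerability. I expect the main obstacle to be precisely this operator identity for the deficit term: a priori one only sees that the traces agree, and the argument hinges on the leftover weight $\delta(\lambda)$ redistributing over the hidden states in exactly the proportion that rebuilds $\rho_B$ — a fact invisible from normalization alone but forced by linearity of the assemblage map and $\sum_a M_a=\II$.
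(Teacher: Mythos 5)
Your proof is correct and is essentially the paper's own argument in different clothing: your response function $P(a|\lambda)=r(a|\lambda)+\delta(\lambda)\,\Tr[M_a\sigma_A]$ is precisely the local strategy encoded in the paper's convex-combination identity (and spelled out physically in its Remark), where with probability $\alpha_i\beta_j/d$ Alice answers as if she had measured the $2$-POVM $(Q_i,\II-Q_i)$, the $\beta_j=\Tr[\sigma_A E_j]$ playing the role of your deficit redistribution. The only presentational difference is that you treat an arbitrary POVM directly via the spectral decomposition of each element $M_a$, whereas the paper first reduces to extremal POVMs with rank-one components; both reductions are legitimate, and your key operator identity $\int\pi(\lambda)\delta(\lambda)\rho_\lambda\,d\lambda=\tfrac{d-1}{d}\rho_B$ is correctly forced by $\sum_a M_a=\II$ exactly as you argue.
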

%\begin{remark}
%Originally the assumption is that $\rho$ is non-steerable with PVMs. It was also noted that PVMs can be replaced by $2$-POVMs in the assumption. Here we note that with the same proof, we can work with the weakest assumption that $\rho$ is non-steerable with $2$-POVMs with minimal rank $1$. 
%\end{remark}
\begin{proof}
Since an {\blu extremal} POVM  $E=(E_1,E_2,\ldots,E_n)$ has at most $n=d^2$ nonzero components, we can fix $n=d^2$. Moreover, we can assume that the components of the POVM $E$ are rank-$1$, namely $E_i=\alpha_i Q_i$ for some rank-$1$ projections $Q_i$ and $0 \le \alpha_i \le 1$ (since all other POVMs can be post-processed from these; see, e.g., Ref.~\citexxx{Barrett2002}).  In proving this lemma, it is convenient to rewrite a measurement $E$ in the direct sum form $E=\oplus_{i=1}^{n} E_i$. The steering ensemble{\blu---the set of Bob's reduced states conditioned on $E$---} is {\blu thus} written as $\oplus_{i=1}^{n} \Tr_A[\tilde{\rho} E_i \otimes \II_B]$. 
Now we claim the following identity:
\begin{equation}
\bigoplus_{k=1}^{n} \Tr_A[\tilde{\rho} E_k \otimes \II_B] = \sum_{i=1}^{n} \sum_{j=1}^{n} \frac{\alpha_i \beta_j}{d} \bigoplus_{k=1}^{n} \Tr_A \left[ \rho (\delta_{ik} Q_i + \delta_{jk} (\II_A-Q_i)) \otimes \II_B  \right],
\label{eq:noise_identity}
\end{equation}
where $\beta_j= \Tr (\sigma_A E_j)$. Note that $\sum_{i=1}^{n} \frac{\alpha_i}{d}=\sum_{j=1}^{n} \beta_j=1$. This identity can be proved straightforwardly by passing the sums over the direct sum and {\blu performing} them explicitly. This identity tells {\blu us} that the steering ensemble $\oplus_{k=1}^{n} \Tr_A[\tilde{\rho} E_k \otimes \II_B]$ can be written as a convex combination of $n^2$ steering ensembles  $\oplus_{k=1}^{n} \Tr_A [ \rho (\delta_{ik} Q_i + \delta_{jk} (\II_A-Q_i)) \otimes \II_B]$, each with probability $\frac{\alpha_i \beta_j}{d}$. Note then that the latter ensembles $\oplus_{k=1}^{n} \Tr [ \rho (\delta_{ik} Q_i + \delta_{jk} (\II_A-Q_i)) \otimes \II_B]$ correspond to steering $\rho$ with $2$-POVMs with a rank-$1$ projection component, $(Q_i, \II_A - Q_i)$, when {\blu empty components} are discarded.  These ensembles can all be locally simulated from a local hidden state (LHS) ensemble by assumption. It follows that the former steering ensemble $\oplus_{k=1}^{n} \Tr_A[\tilde{\rho} E_k \otimes \II_B]$ can also be locally simulated from the LHS ensemble. In other words, $\tilde{\rho}$ is nonsteerable with $n$-POVMs.
\end{proof}

\begin{remark}
We can translate this mathematical proof to a physical one. Alice's aim is to simulate steering of $\tilde{\rho}$ with POVMs $E=(\alpha_1 Q_1,\alpha_2 Q_2,\ldots,\alpha_n Q_n)$ (chosen by Bob). Alice provides Bob with the LHS ensemble that she can use to simulate steering of $\rho$ with $2$-POVMs with one rank-$1$ projection component. With probability $\alpha_i\beta_j/d$, she chooses a pair $(i,j)$. She then simulates the outcomes $i$ and $j$ as if they are outcomes of $Q_i$ and $\II_A-Q_i$ in the measurement $(Q_i,\II_A-Q_i)$, respectively. This is slightly different from the protocol in Ref.~\citexxx{Hirsch2013}, but the result is the same. 
\end{remark}
%%%
%Now we can recover the original form of the lemma by Quintino et al. One starts with a state $\rho$ in $\CC^{d} \otimes \CC^{d'}$, which is non-steerable with PVMs, thus also with $2$-POVMs with a rank-$1$ projection component. The state is then embedded into $\CC^{d+1} \otimes \CC^{d'}$, which does not change its steerability with $2$-POVMs with a rank-$1$ projection component by Lemma~\ref{lem:embedding_2_POVMs_rank_1}. Then one applies Lemma~\ref{lem:noisy_POVM}, noting that Alice's dimension now is $d+1$ and $\sigma_A$ it is chosen to be orthogonal to the original support of $\rho_A$. 
%====================================================================================
%\section{Two-qubit with loss}

Suppose we have a two-qubit state $\rho$ acting on $\CC^2 \otimes \CC^2$. A {\blu state} with loss can be described by a state on $\CC^3 \otimes \CC^2$ as
\begin{equation}
\rho_{\varepsilon_A}= \varepsilon_A \rho + (1-\varepsilon_A) \ketbra{\nu}{\nu} \otimes \rho_B, 
\label{eq:two_qubit_loss}
\end{equation}
where $\ket{\nu}$ is the vacuum state, which is orthogonal to the standard qubit states $\ket{0}$ and $\ket{1}$ (see the main text and  Ref.~\cite{Baker2018}). Here {\blu $\varepsilon_A$} is Alice's heralding efficiency, $0\le \varepsilon_A \le 1$.  

In Ref.~\cite{Baker2018}, some of the present authors considered in particular the steerability of $\rho_{\varepsilon_A}$ with PVMs which are restricted to the form $(\ketbra{\varphi_1}{\varphi_1},\ketbra{\varphi_2}{\varphi_2},\ketbra{\nu}{\nu})$, where $\ket{\varphi_1}$ and $\ket{\varphi_2}$ are two orthogonal qubit states (which are both orthogonal to the vacuum $\ket{\nu}$). These PVMs are not the most general PVMs, and will be referred to as \emph{restricted} PVMs in the following. Let $\rho$ be written as 
\begin{equation}
\rho= \frac{1}{4} \left[\II \otimes \II + \sum_{i=1}^{3}  a_i \sigma_i \otimes \II + \sum_{i,j=1}^{3} T_{ij } \sigma_i \otimes \sigma_j \right],  
\end{equation}
where $\mathbf{a}=(a_1,a_2,a_3)^T$ is Alice's Bloch vector, and $T$ is the correlation matrix, which can be assumed to be diagonal. 
Theorem~1 in Ref.~\cite{Baker2018} states that ${\blu \rho_{\varepsilon_A}}$ is nonsteerable with restricted PVMs if
\begin{equation}
\max_{\hat{\bf{x}}} \left[ (1-\varepsilon_A) \abs{\bf{a} \cdot \hat{\bf{x}}} + \frac{\varepsilon_A}{2} (1+\abs{\bf{a} \cdot \hat{\bf{x}}}^2) + \norm{T\hat{\bf{x}}}\right] \le 1,
\label{eq:restricted_PVMs}
\end{equation}
where the maximization is taken over all unit vectors $\hat{\bf{x}}$ of $\mathbb{R}^3$.

We now show that nonsteerability with restricted PVMs implies nonsteerability with all $2$-POVMs with a rank-$1$ projection component. This then allows us to apply Lemma~\ref{lem:noisy_POVM} to construct a state which is nonsteerable with arbitrary POVMs.

\begin{lemma}
Consider the {\blu two-qubit state} with loss $\rho_{\varepsilon_A}$ in~\eqref{eq:two_qubit_loss}. If $\rho_{\varepsilon_A}$ is nonsteerable with restricted PVMs, then it is nonsteerable with all $2$-POVMs with a rank-$1$ projection component.
\label{lem:two_qubit_loss_2POVMs}
\end{lemma}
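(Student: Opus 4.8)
The plan is to show that every $2$-POVM $(Q,\II_A-Q)$ with a rank-$1$ projection $Q=\ketbra{\psi}{\psi}$ on Alice's qutrit $\CC^3$ produces a steering assemblage (the set of Bob's unnormalized conditional states) that is a \emph{fixed classical post-processing} of the three-outcome assemblage of a single restricted PVM. Since relabeling outcomes can never turn a nonsteerable assemblage into a steerable one, nonsteerability with restricted PVMs then immediately forces nonsteerability with all such $2$-POVMs.

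First I would decompose the measurement vector as $\ket\psi = c_\nu\ket\nu + \ket\phi$, with $\ket\phi$ in the qubit subspace $\mathrm{span}(\ket 0,\ket 1)$, and set $p=\norm{\phi}^2$, $q=\abs{c_\nu}^2=1-p$, together with the normalized qubit state $\ket{\hat\phi}=\ket\phi/\norm\phi$ (the degenerate case $\phi=0$, i.e.\ $Q=\ketbra{\nu}{\nu}$, needs no separate treatment since then $p=0$ and the $\hat\phi$-term simply drops out). The crucial input is the block structure of $\rho_{\varepsilon_A}$ in~\eqref{eq:two_qubit_loss}: $\rho$ is supported on the qubit subspace and hence annihilates $\ket\nu$ on Alice's side, while the loss term is a product with Alice in the pure vacuum. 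Sandwiching by $\ket\psi$ on Alice's side and computing $\tilde\sigma_Q=\Tr_A[\rho_{\varepsilon_A}(Q\otimes\II_B)]$ then yields $\tilde\sigma_Q = \varepsilon_A\, p\,\bra{\hat\phi}\rho\ket{\hat\phi} + (1-\varepsilon_A)\,q\,\rho_B$.

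Next I would recognize the two summands as assemblage elements of the restricted PVM $(\ketbra{\hat\phi}{\hat\phi},\ketbra{\hat\phi^\perp}{\hat\phi^\perp},\ketbra{\nu}{\nu})$. Writing $\tilde\sigma_{\hat\phi}=\varepsilon_A\bra{\hat\phi}\rho\ket{\hat\phi}$, $\tilde\sigma_{\hat\phi^\perp}=\varepsilon_A\bra{\hat\phi^\perp}\rho\ket{\hat\phi^\perp}$, and $\tilde\sigma_\nu=(1-\varepsilon_A)\rho_B$, one gets $\tilde\sigma_Q = p\,\tilde\sigma_{\hat\phi}+q\,\tilde\sigma_\nu$, and from $\tilde\sigma_Q+\tilde\sigma_{\II_A-Q}=\rho_B=\tilde\sigma_{\hat\phi}+\tilde\sigma_{\hat\phi^\perp}+\tilde\sigma_\nu$ one gets $\tilde\sigma_{\II_A-Q}=q\,\tilde\sigma_{\hat\phi}+\tilde\sigma_{\hat\phi^\perp}+p\,\tilde\sigma_\nu$. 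These identities say precisely that the two-outcome assemblage arises from the restricted-PVM assemblage by the stochastic relabeling $\hat\phi\mapsto Q$ with probability $p$ (else $\II_A-Q$), $\nu\mapsto Q$ with probability $q$ (else $\II_A-Q$), and $\hat\phi^\perp\mapsto \II_A-Q$ deterministically; since $p,q\in[0,1]$ these are legitimate branching probabilities.

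Finally I would transfer the local hidden state (LHS) model. Given an LHS ensemble $\{\wp_\lambda,\sigma_\lambda\}$ with response functions reproducing every restricted-PVM assemblage, I keep the \emph{same} ensemble and define the response functions for $(Q,\II_A-Q)$ by applying the above relabeling to the responses of the restricted PVM aligned with $\hat\phi$; these are non-negative and normalized, and by linearity of the map $\int d\lambda\,\wp_\lambda(\cdot)\,\sigma_\lambda$ they reproduce $\tilde\sigma_Q$ and $\tilde\sigma_{\II_A-Q}$. Hence $\rho_{\varepsilon_A}$ is nonsteerable with all $2$-POVMs with a rank-$1$ projection component. The only place where real work happens — and thus the main obstacle — is verifying the assemblage decomposition: it rests entirely on the orthogonality and product structure of $\rho_{\varepsilon_A}$, and one must be careful that the restricted PVM is aligned with the \emph{qubit} component $\ket{\hat\phi}$ (not the full qutrit vector $\ket\psi$) and that the weights $p,q$ are exactly the squared norms of the qubit and vacuum components of $\ket\psi$.
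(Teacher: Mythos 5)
Your proof is correct and follows essentially the same route as the paper's: you decompose $\ket{\psi}$ into its qubit and vacuum components (the paper writes $\Pi\ket{\psi}=r\ket{\varphi}$, so $r^2=p$ and $\ket{\varphi}=\ket{\hat\phi}$), obtain the same convex decomposition $\tilde\sigma_Q = p\,\tilde\sigma_{\hat\phi}+q\,\tilde\sigma_\nu$ of the steering outcome into elements of the restricted-PVM assemblage, and transfer the LHS model. The only difference is cosmetic: you spell out the stochastic relabeling for both outcomes explicitly, whereas the paper simulates only the $\ketbra{\psi}{\psi}$ outcome and leaves the complement implicit.
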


\begin{proof}
Consider Alice making a measurement of the form $(\ketbra{\psi}{\psi},\II_A-\ketbra{\psi}{\psi})$, where $\II_A$ is the identity operator on Alice's space $\CC^3$, $\ket{\psi}$ is a state of $\CC^3$, which may {\blu be nonorthogonal} to the vacuum  $\ket{\nu}$. It is sufficient to show that the steering outcome corresponding to $\ketbra{\psi}{\psi}$ can be locally simulated. We start with finding the steering outcome $\Tr_A [\rho_{\varepsilon_A} (\ketbra{\psi}{\psi} \otimes \II_B) ]$ for $\ketbra{\psi}{\psi}$, which is
\begin{equation}
\varepsilon_A \Tr_A (\rho \ketbra{\psi}{\psi} \otimes \II_B) + (1-\varepsilon_A) \abs{\braket{\nu}{\psi}}^2 \rho_B,
\end{equation}  
where $\II_B$ is {\blu the} identity operator acting on Bob's space. 
Since $\Tr_B [\rho]$ has no support on $\ketbra{\nu}{\nu}$, we can insert the projection $\Pi=\II_A-\ketbra{\nu}{\nu}$, which projects $\CC^3$ to the two-qubit space $\CC^2$, such that $\rho= (\Pi \otimes \II_B) \rho (\Pi \otimes \II_B)$. Therefore, 
\begin{equation}
\Tr_A (\rho \ketbra{\psi}{\psi} \otimes \II_B)= \Tr_A (\rho \Pi \ketbra{\psi}{\psi} \Pi \otimes \II_B).
\end{equation}
Now let $\Pi \ket{\psi}=r\ket{\varphi}$ with $\ket{\varphi}$ being a two-qubit state. Note that $1=\bra{\psi} (\Pi + \ketbra{\nu}{\nu}) \ket{\psi}$, so  $\abs{\braket{\nu}{\psi}}^2=1-r^2$.
The steering outcome $\Tr_A [\rho_{\varepsilon_A} (\ketbra{\psi}{\psi} \otimes \II_B) ]$ therefore can be written as
\begin{equation}
r^2 \varepsilon_A \Tr_A (\rho \ketbra{\varphi}{\varphi} \otimes \II_B) + (1-r^2) (1-\varepsilon_A) \rho_B.
\end{equation}
This is explicitly a convex combination of two steering outcomes $\varepsilon_A \Tr_A (\rho \ketbra{\varphi}{\varphi} \otimes \II_B)$ and $(1-\varepsilon_A) \rho_B$. Trivially, the latter steering outcome $(1-\varepsilon_A) \rho_B$ can be locally simulated. We therefore need only to show that the steering outcome $\varepsilon_A \Tr_A (\rho \ketbra{\varphi}{\varphi} \otimes \II_B)$ can be locally simulated. However, this is exactly one steering outcome of the steering ensemble made by the restricted PVM $(\ketbra{\varphi}{\varphi},\ketbra{\varphi_\perp}{\varphi_\perp},\ketbra{\nu}{\nu})$, where $\ket{\varphi_\perp}$ is the two-qubit state that  is orthogonal to the two-qubit state $\ket{\varphi}$ (and the vacuum $\ket{\nu}$). The latter can be locally simulated with a LHS ensemble by assumption. Therefore, $\varepsilon_A \Tr_A (\rho \ketbra{\varphi}{\varphi} \otimes \II_B)$ can indeed be locally simulated with the LHS ensemble and $\rho_{\varepsilon_A}$ is nonsteerable with $2$-POVMs with a rank-$1$ projection component. 
\end{proof}

Now if the two-qubit state with loss $\rho_{\varepsilon_A}$ in Eq.\ \eqref{eq:two_qubit_loss} is nonsteerable with restricted PVMs, this lemma guarantees that it is also nonsteerable with $2$-POVMs with one rank-$1$ projection component. Applying Lemma~\ref{lem:noisy_POVM} with $\sigma_A=\ketbra{\nu}{\nu}$, we find that 
\begin{equation}
\frac{\varepsilon_A}{3} \rho + \left( 1-\frac{\varepsilon_A}{3} \right) \ketbra{\nu}{\nu} \otimes \rho_B
\end{equation}
is nonsteerable with arbitrary POVMs. Now retrieving the criterion for nonsteerability with restricted PVMs~\eqref{eq:restricted_PVMs}, we find the {\blu sufficient} condition for a two-qubit state with loss $\rho_{\varepsilon_A}$ to be nonsteerable with arbitrary POVMs to be
\begin{equation}
\max_{\hat{\bf{x}}} \left[ (1-3\varepsilon_A) \abs{\bf{a} \cdot \hat{\bf{x}}} + \frac{3\varepsilon_A}{2} (1+\abs{\bf{a} \cdot \hat{\bf{x}}}^2) + \norm{T\hat{\bf{x}}}\right] \le 1.
\label{eq:POVMs}
\end{equation}
This is exactly the criterion {\blu of Eqs.} (1) and (2) in the main text, except that there, we considered steering from Bob to Alice, and as a consequence Alice's Bloch vector $\bf{a}$ {\blu and heralding efficiency $\varepsilon_A$} were replaced by Bob's Bloch vector $\bf{b}$ {\blu and heralding efficiency $\varepsilon_B$}.
}

\section{II. Density matrices}

The density matrices that make up the data for Table I are displayed in Fig.\ \ref{figS1}.

\begin{figure}[thb!]
%	\centering
	\includegraphics[width=1\textwidth]{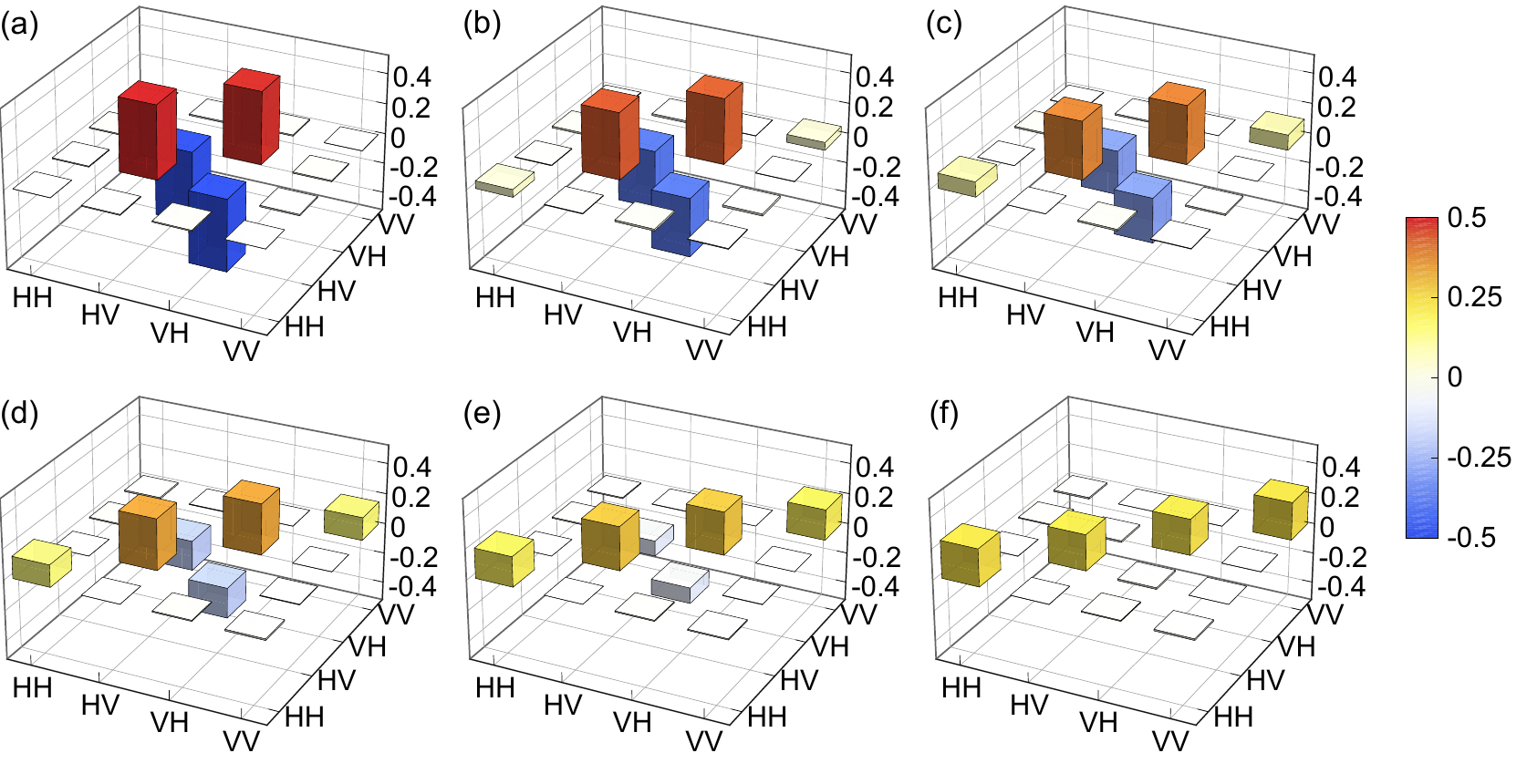}

	\caption{Experimental quantum states from the Werner-state source that are used for Table I. The plots depict the real parts of the density matrices, obtained through quantum state tomography, in order of decreasing effective $\mu$ value. The absolute values of the imaginary parts are all below 0.02.}
	\label{figS1}
\end{figure}

\section{III. Experimental Details}

%\subsection{Details of optical elements in the setup of Fig.\ \ref{fig2}}
\paragraph{Details of optical elements in the setup of Fig.\ \ref{fig2}.---}

The ppKTP crystal has a poling period of 46.20 $\mu \mathrm{m}$. The focal lengths of the lenses are $\mathrm{f_1}=75~ \mathrm{cm}$ and $\mathrm{f_2}=25 ~\mathrm{cm}$. The BDs are made of alpha-BBO, with a displacement of $1.4~ \mathrm{mm}$ for BD1, and $3.0~ \mathrm{mm}$ for BD2. The bandpass filter is centered at 1550 nm and has a full width at half maximum (FWHM) of 8.8 nm.

\paragraph{Heralding efficiencies.---}
%\subsection{Heralding efficiencies}

The pump and detection beam waists are approximately $ 300~\mu\mathrm{m}$ and $115~\mu\mathrm{m}$, respectively. If the aim was to achieve a fixed value of $\mu\ne 0.5$, it would generally be possible to further increase the heralding efficiency in the following way: Using beam splitters with a splitting ratio other than 50:50, the output of one of the individual sources (singlet state source or maximally mixed state source) would suffer more loss, while the output of the other one would undergo less loss. This asymmetry could be exploited to lower the loss for the singlet state or the maximally mixed state, whichever has the heavier weight in the Werner state. For the extreme cases where $\mu=0$ or $\mu=1$, this approach would lead to the use of fully reflective or transmissive ``beam splitters'', which is intuitive since only one of the individual sources would be required. However, using splitting ratios other than 50:50, the overall heralding efficiency would depend on the value of $\mu$ because the heralding efficiencies from the two individual sources would not be equal, and $\mu$ determines the contribution from each. This means that the improved overall heralding efficiency at a given $\mu$ value would be achieved at the expense of a reduced tunability of $\mu$. We choose a 50:50 splitting ratio to maintain tunability over the full range of parameter values with fixed heralding efficiencies. Indeed, Figs. \ref{fig3}(a) and (b) show that $\varepsilon_A$ is nearly constant over the whole range of $\mu$ values.

\paragraph{Gating.---}
%\subsection{Gating}

To reduce the effect of dark counts, we gate the detection of photons (singles and coincidences) based on emission events of the pump laser, by detecting in a 3 ns window around a synchronization signal from the laser. This results in average dark count values (per second) of 99 and 156 for the detectors in Alice's arm, and 55 and 113 for the detectors in Bob's arm.

\section{IV. Steering parameter}

The steering parameter for the number of measurement settings $n=6$ is defined as $S=\frac{1}{6} \sum_{k=1}^6 \langle a_k \hat{\sigma}_k^B \rangle$. Here, $a_k \in \{ -1,1 \}$ is the measurement outcome announced by Alice for those measurements where Bob has chosen the the measurement setting corresponding to the Pauli observable  $\hat{\sigma}_k^B$ in the direction $\mathbf{u}_k$. The measurement setting is chosen out of a predetermined set of six options \cite{Bennet2012}. The steering bound is a nontrivial function of $\varepsilon_A$ and is based on Alice's optimal cheating strategy \cite{Bennet2012}.

%The sufficient condition for nonsteerability from Ref.\ \cite{Baker2018} is given by
%\begin{equation} 
%N_{POVM}=\max_{\hat{\bf{x}}\in \widehat{\mathbb{R}^3}} [(1-4 \varepsilon_B) {\bf{b}}\cdot\hat{{\bf{x}} }+ 2 \varepsilon_B (1+({\bf{b}}\cdot \hat{{\bf{x}}})^2)+||T\hat{{\bf{x}}}||]\le 1.
%\label{Traveqn}
%\end{equation}
%Here, $\bf{b}$ is Bob's reduced Bloch vector, $T$ is the correlation matrix of the quantum state in its canonical form, and $||...||$ denotes the 2-norm \cite{Baker2018}. The maximisation is carried out over all unit vectors $\hat{{\bf{x}}}$ in three dimensions.

\section{V. Comparison with result from Ref.\ \cite{Wollmann2016}}

In Ref.\ \cite{Wollmann2016}, the conclusion of one-way steering was reached based on an analysis for Werner states. To this end, the Werner states closest to the experimentally obtained density matrices were used. Enabling an analysis of general states,  the work of Ref.\ \cite{Baker2018} has since indicated that even small deviations from a Werner state can have a significant bearing on proving the nonsteerability of quantum states. {\blu Here, we use our new condition for nonsteerability (Eq.\ (\ref{Traveqn})) to compare one of our states with the quantum state from Ref.\ \cite{Wollmann2016} that was thought to be one-way steerable for POVMs. We show that in contrast to our quantum state, the state from Ref.\ \cite{Wollmann2016} does not conclusively meet our condition for nonsteerability, and reveal the key experimental improvements in the present work.}

\begin{figure}[h!]
%	\centering
	\includegraphics[width=0.7\textwidth]{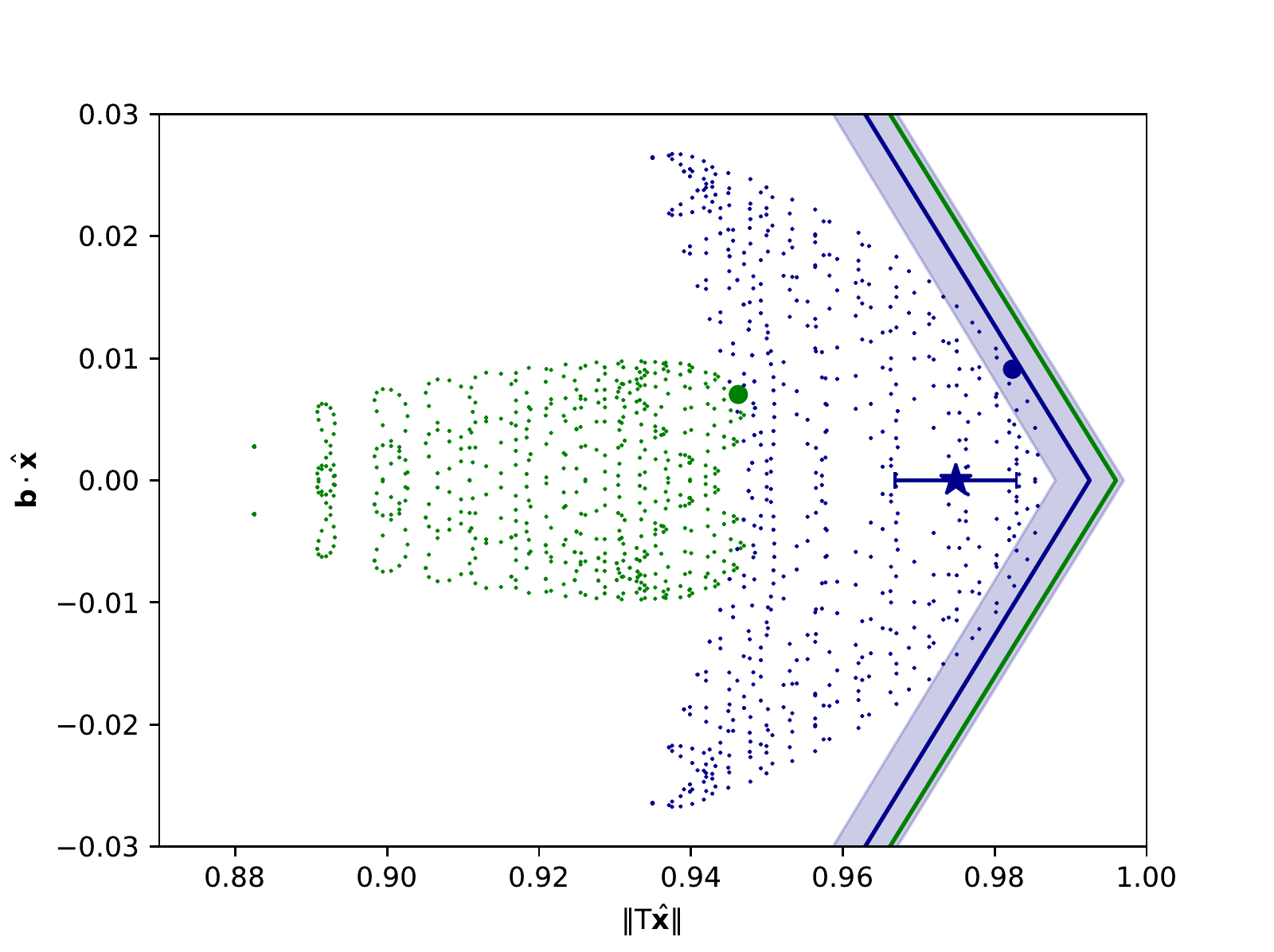}

	\caption{Comparison of the quantum state of Ref.\ \cite{Wollmann2016} (in blue) with one of our one-way steerable states (in green), with regard to the sufficient condition for nonsteerability from Bob to Alice of {\blu Eq.\ (\ref{Traveqn})}. The lines represent the sufficient condition for nonsteerability, given the experimental $\varepsilon_B$ values of $0.005\pm 0.003$ (blue) and $0.00269\pm 0.00001$ (green).  The shaded regions, visible for the blue line but too small to be discernible for the green line, show the corresponding uncertainties, based on the uncertainties of the $\varepsilon_B$ values. Each ensemble of points is associated with one quantum state, and the sufficient condition for nonsteerability (Eq.\ (\ref{Traveqn})) is equivalent to the whole ensemble being to the left of the corresponding line. An ensemble captures relevant properties of the quantum state, with individual points corresponding to specific choices of the unit vector $\hat{{\bf{x}}}$, of which a representative sample of 625 choices is shown. The choice of $\hat{{\bf{x}}}$ that constitutes the solution to the maximization in Eq.\ (\ref{Traveqn}) is marked by the larger point symbol. For the quantum state from Ref.\ \cite{Wollmann2016}, the sufficient condition for nonsteerability is not {\blu satisfied with statistical significance (note the shaded region), whereas it is for our quantum state}. The quantum state used as the one-way steerable state example from the present experiment is the one corresponding to the left-most data points in Fig.\ \ref{fig3}(c) and (d). For the case of an exact Werner state, the ensemble would collapse to a single point in the plot. {\blu An} assumption made in Ref.\ \cite{Wollmann2016} but not in this work, namely the mapping to a Werner state, is illustrated by the star. The uncertainty in its position, based on the uncertainty in the quantum state, is indicated by the horizontal error bar.
}
	\label{figS2}
\end{figure}

A useful way to visualize the relevant properties of a quantum state is through its correlation matrix $T$ and Bob's local Bloch vector $\bf{b}$, as detailed in {\blu Sec.\ I of this Supplemental Material}. Specifically important are $||T\hat{{\bf{x}}}||$, the norm of the correlation matrix multiplied by arbitrary unit vectors $\hat{{\bf{x}}}$, and ${\bf{b}}\cdot\hat{{\bf{x}}}$, the projection of the reduced Bloch vector onto the same unit vectors, plotted in Fig.\ \ref{figS2}. If the points for all choices of $\hat{{\bf{x}}}$ lie to the left of a bound, the state is nonsteerable from Bob to Alice. The bound depends on Bob's heralding efficiency $\varepsilon_B$.
 
Fig.\ \ref{figS2} shows our quantum state and that of Ref.\ \cite{Wollmann2016}, each with its corresponding bound calculated from the measured values of $\varepsilon_B$. Our bound is slightly to the right of the bound from Ref.\ \cite{Wollmann2016} because we work with a lower value of $\varepsilon_B$, and this makes witnessing nonsteerability a little easier. However, the main differences between the two cases can be ascribed to the two ensembles of points. The ensemble corresponding to our state (i) lies at lower values of $||T\hat{{\bf{x}}}||$, and (ii) has a smaller spread in ${\bf{b}}\cdot\hat{{\bf{x}}}$. 

The lower values of $||T\hat{{\bf{x}}}||$ are due to lower correlations in the state, also evidenced by a lower effective $\mu$ value. This generally helps to show nonsteerability from Bob to Alice, while at the same time making it more challenging to demonstrate steering from Alice to Bob. We have the option to operate at these lower correlation values and still violate a detection-loophole-free steering inequality from Alice to Bob, thanks to Alice's high heralding efficiency. It should also be noted that it is the tunability of our source which gives us the freedom to move to this advantageous correlation condition.

The smaller spread in ${\bf{b}}\cdot\hat{{\bf{x}}}$ is an indication of a smaller reduced Bloch vector, which would be 0 for the ideal case of an exact Werner state. The reason that the spread in ${\bf{b}}\cdot\hat{{\bf{x}}}$ matters is because the bounds do not correspond to vertical lines in the plot.

In summary, the {\blu experimental} improvement can be attributed to several factors: our ability to tune the correlations of our experimental quantum state, Alice's higher heralding efficiency that lets us demonstrate steering from Alice to Bob with lower correlations, a better state quality in the sense of a smaller reduced Bloch vector, and Bob's lower heralding efficiency.

%\bibliographystylexxx{ecca}
%\bibliographyxxx{biblio_OWS}
%\bibliography{biblio_OWS}

\end{document}